\newcommand\logsumexp[1]{\log\sum_{#1}\exp}
\DeclareMathOperator*{\mybigoplus}{\bigoplus\nolimits_\gamma}
\newcommand{\sisdr}{\text{SI-SDR}}
\newcommand{\prob}{p}
\newcommand{\numsrc}{N}
\newcommand{\nummix}{M}
\newcommand{\bigo}{O}
\newcommand{\perms}{\mathcal{P}_\numsrc}
\newcommand{\birkhoff}{\mathcal{B}_\numsrc}
\newcommand{\entropy}[1]{H(#1)}
\newtheorem{theorem}{Theorem}
\newtheorem*{theorem*}{Theorem}
\newtheorem*{problem*}{Problem}
\newtheorem*{pitloss*}{PIT Loss}
\newtheorem*{dfn*}{Definition}
\newcommand\mat[1]{\mathbf{{#1}}}
\newcommand\frob[2]{\left\langle{#1},{#2}\right\rangle_\mathsf{F}}
\newcommand{\appropto}{\mathrel{\vcenter{
  \offinterlineskip\halign{\hfil$##$\cr
    \propto\cr\noalign{\kern2pt}\sim\cr\noalign{\kern-2pt}}}}}
\newcommand\copyrighttext{%
    \footnotesize \copyright 2021 IEEE. Personal use of this material is permitted.
    Permission from IEEE must be obtained for all other uses, in any current or future
    media, including reprinting/republishing this material for advertising or promotional
    purposes, creating new collective works, for resale or redistribution to servers or
    lists, or reuse of any copyrighted component of this work in other works.
}\newcommand\mycopyrightnotice{%
\begin{tikzpicture}[remember picture,overlay]
\node[anchor=north,yshift=-10pt] at 
    (current page.north) {{\parbox{\dimexpr\textwidth-\fboxsep-\fboxrule\relax}{\copyrighttext}}};
\end{tikzpicture}%
}
\title{
    Towards Listening to 10 People Simultaneously:
    An Efficient Permutation Invariant Training of Audio Source Separation Using Sinkhorn's Algorithm
}%
\name{Hideyuki Tachibana
}
\address{PKSHA Technology Inc., Hongo, Bunkyo, Tokyo, Japan}
\begin{document}
\ninept
\maketitle
\begin{abstract}
    In neural network-based monaural speech separation techniques,
    it has been recently common to evaluate the loss using the permutation invariant training (PIT) loss.
    However, the ordinary PIT requires to try all $\numsrc!$ permutations
    between $\numsrc$ ground truths and $\numsrc$ estimates.
    Since the factorial complexity explodes very rapidly as $\numsrc$ increases,
    a PIT-based training works only when the number of source signals is small, 
    such as $\numsrc = 2$ or $3$.
    To overcome this limitation,
    this paper proposes a SinkPIT, a novel variant of the PIT losses,
    which is much more efficient than the ordinary PIT loss when $\numsrc$ is large.
    The SinkPIT is based on Sinkhorn's matrix balancing algorithm,
    which efficiently finds 
    a doubly stochastic matrix which approximates the best permutation
    in a differentiable manner.
    The author conducted an experiment
    to train a neural network model to decompose a single-channel mixture into 10 sources using the SinkPIT,
    and obtained promising results.
\end{abstract}
\mycopyrightnotice%
\begin{keywords}
    audio source separation,
    permutation invariant training,
    Sinkhorn algorithm,
    neural networks.
\end{keywords}

\section{Introduction}
    Audio source separation
    is a versatile module for many practical audio applications
    e.g.\ a speech recognition in a crowded situation, a hearing aid, a meeting transcription, etc. 
    The objective of audio source separation is to solve an inverse problem as follows.
    \begin{problem*}
        There are $\numsrc$ unknown source signals $[s_1(t), \cdots, s_\numsrc(t)]$ 
        $($e.g.\ speech signals of $\numsrc$ people talking simultaneously$)$,
        and we are given $\nummix$ mixtures of them $x_i(t) = \sum_{j=1}^\numsrc A_{ij} s_j(t), (1 \le i \le \nummix)$, 
        where $\mat{A} = (A_{ij}) \in \mathbb{R}^{\nummix \times \numsrc}$ is the mixing matrix, which is also unknown.
        Then, construct a function
        $f_\theta: [x_1(t), \cdots, x_\nummix(t)] \mapsto [y_1(t), \cdots, y_\numsrc(t)]$ 
        such that the set of estimates $\{y_i(t)\}_{i=1}^N$ approximates the set of source signals $\{s_i(t)\}_{i=1}^N$.
        To this end, we may use domain knowledge,
        or training data $($a collection of speech signals$)$,
        or other limited auxiliary $($e.g. spatial$)$ information.
    \end{problem*}
    The underdetermined case $\nummix < \numsrc$ is especially challenging,
    and has been the subject of many attempts.
    (Early pioneering work includes~\cite{lee1999blind,amari1999natural,bofill2001underdetermined} etc.)
    Recently, the single-mixture source separation problem (i.e.\ $\nummix = 1, \numsrc \ge 2$)
    which does not require special hardware such as microphone arrays, 
    has received much attention,
    and a number of techniques based 
    on deep neural networks~\cite{Isik2016sep,%
        yu2017permutation,kolbaek2017multitalker,kinoshita2018listening,luo2018tasnet,%
        takahashi2019recursive,Luo_2019,yousefi2019probabilistic,nachmani2020voice,luo2020dual,chen2020dual,
        wisdom2020unsupervised}
    have been proposed.
    In particular, ConvTasNet~\cite{Luo_2019} brought a significant breakthrough 
    in that it outperformed the ideal binary/ratio masks 
    (IBM/IRM)~\cite{wang1999separation,wang2014training}
    which had been implicitly supposed to be the ``upper limits'' or the ``targets'' of source separation techniques.

    Despite the powerful capabilities of deep neural networks,
    however,
    most studies have focused on improving the SDRs, 
    fixing the number of source signals $\numsrc$ to be $2$ or $3$.
    To our knowledge, $\numsrc=5$ is the largest $\numsrc$ reported~\cite{nachmani2020voice}.
    In this paper, we take advantage of its powerful potential to separate a single mixture into much more sources.
    In specific, we consider an extreme case $\numsrc=10$..

    A reason why the studies in this direction have attracted less attention would be the 
    difficulty to solve the permutation ambiguity problem.
    That is, as 
    the indices of signals are arbitrary,
    we need to evaluate all possible $s_i$-$y_j$ permutations
    to evaluate the loss between $\{s_i(t)\}$ and $\{y_j(t)\}$.
    Recent neural network-based studies refer to
    this process as the Permutation Invariant Training (PIT)~\cite{Isik2016sep,yu2017permutation,kolbaek2017multitalker}.
    \begin{dfn*}[PIT Loss~\cite{Isik2016sep,yu2017permutation,kolbaek2017multitalker}]
        Given a pairwise loss matrix $\mat{C} \in \mathbb{R}^{\numsrc \times \numsrc}$
        whose each element $C_{ij}$ is the pairwise loss between $s_i(t)$ and $y_j(t)$.
        Let $\sigma : i \mapsto j$ be a permutation, and
        %$\sigma(\cdot)$ is a permutation which maps the set of $n$ integers to itself bijectively, and 
        let $\mathfrak{S}_\numsrc$ be the set of all permutations of $\numsrc$ objects. 
        Then, the PIT loss $\mathcal{L}_\text{\normalfont PIT}$ is defined as
        \begin{equation}
            \mathcal{L}_\text{\normalfont PIT} \coloneqq 
            \mathbb{E}
                \bigg[
                    \frac{1}{\numsrc}\min_{\sigma \in \mathfrak{S}_\numsrc} \sum_{i=1}^\numsrc C_{i\sigma(i)} 
                \bigg],
            \label{eq:objective}
        \end{equation}
        where the expectation $\mathbb{E}[\cdot]$ is computed with respect to 
        the empirical distribution of the training data $($i.e.,\ the batch average$)$.
    \end{dfn*}
    The straightforward evaluation of ``$\min_{\sigma \in \mathfrak{S}_\numsrc}$,'' however, works only when $\numsrc$ is very small.
    If we evaluate it by a brute-force search
    (which is actually the case in some open implementations including the Asteroid~\cite{Pariente2020Asteroid}\footnote{
        In Oct.\ 27, 2020 (after the initial submission of this paper),
        Asteroid's default PIT was improved to $O(N^3)$
        using the Kuhn-Munkres (Hungarian) algorithm.
        The experiments in this paper are based on the previous factorial version.
    }),
    we need to scan all $|\mathfrak{S}_\numsrc| = \numsrc!$ permutations to find the best $\sigma$
    for \textit{every single step} of the stochastic gradient descent (SGD).
    Using the author's machine, each step takes about 4 seconds 
    when $\numsrc=10$.
    Assuming that $\text{10}^\text{6}$ SGD steps are required, it would take
    $\text{4}\times \text{10}^\text{6}$[s] = 46 days
    to perform a PIT-based training.
    Furthermore, as the factorial complexity $\bigo(\numsrc!)$ explodes very rapidly, 
    it becomes entirely impractical if $\numsrc$ increases only a bit more.

    This paper presents a practical solution to this problem. 
    In this paper, we propose a \textit{SinkPIT} (Sinkhorn PIT),
    a variant of the PIT loss.
    The alternative PIT loss telescopes our permutation space odyssey into an excursion,
    and enables us to try a large $\numsrc$ much larger than ever.
    The new loss is based on Sinkhorn's algorithm,
    which is recently attracting the interests in optimal transport and some fields of 
    machine learning~\cite{Moon_Gunther_Kupin_2009,Cuturi_2013,altschuler2017near,santa2017deeppermnet,Mena_Belanger_Linderman_Snoek_2018}.
    The contribution of this paper is threefold.
    (1) We propose a SinkPIT loss, a surrogate loss
    of the original PIT loss,
    and we experimentally verify that it works in our problem setting.
    (2)
    To our knowledge, this is the first attempt to decompose a single channel audio mixture into 10 speech signals which are densely overlapping.
    (3) To our knowledge, this is the first report on the effectiveness of Sinkhorn's algorithm 
    in the source separation problems.
    
\section{Sinkhorn PIT Loss}
    In this section, we derive a relaxed version of PIT, which is computationally tractable. Our goal is Eq.~\eqref{eq:sinkhornobjective}.
    Although many of the mathematical concepts in this section are well-known in applied optimal transport 
    (see e.g.~\cite{Cuturi_2013},\cite[\S 6.4]{santambrogio2015optimal},\cite[\S 7.3]{galichon2018optimal},\cite[\S 4]{peyre2019computational}),
    we summarize basic concepts for convenience.

\subsection{Doubly Stochastic Matrix and Sinkhorn's Theorem}
    A matrix $\mat{B} = (B_{ij})$ is \textit{doubly stochastic} if its all elements are non-negative and it marginally sums 1;
    i.e., $\sum_{i} B_{ij} = \sum_{j} B_{ij} = 1$.
    If all the elements of a doubly stochastic matrix are either $0$ or $1$,
    the matrix is called a permutation matrix.
    Let $\birkhoff$ denote the set of doubly stochastic matrices of size $\numsrc \times \numsrc$,
    which is often referred to as Birkhoff's polytope,
    and let $\perms$ denote the set of permutation matrices.
    Evidently $\perms \subset \birkhoff \subset \mathbb{R}^{\numsrc\times \numsrc}$, and moreover, 
    it is known that $\birkhoff$ is the smallest convex set that contains $\perms$
    and that $\perms$ is the set of the all vertices of the polytope $\birkhoff$.
    (Birkhoff-von~Neumann's theorem~\cite[Cor.11.5]{korte2012combinatorial}). 

    Since $\perms$ and $\mathfrak{S}_\numsrc$ are essentially the same 
    (let $P_{ij} = \delta_{\sigma(i)j}$),
    we may rewrite 
    Eq.~\eqref{eq:objective} using permutation matrices as follows, 
    \begin{equation}
        \mathcal{L}_\text{\normalfont PIT} 
        = \frac{1}{\numsrc} \mathbb{E} 
            \Big[
                    \min_{ \mat{P} \in \perms} \frob{ \mat{C} }{ \mat{P} } 
            \Big] \label{eq:frob}
    \end{equation}
    where $\frob{\mat{X}}{\mat{Y}} = \mathrm{Tr}(\mat{X}^\mathsf{T}\mat{Y}) = \sum_{ij}X_{ij}Y_{ij}$ 
    denotes the Frobenius inner product.
    We are interested in relaxing this quantity by using $\birkhoff$.
    As a preparation for that, let us introduce the celebrated Sinkhorn's theorem.
    \begin{theorem}[Sinkhorn~\cite{sinkhorn1964relationship}]\label{thm1} 
        Let $\mat{Y}$ be a square matrix whose all elements are strictly positive. 
        {\normalfont (1)} 
            There exist diagonal matrices $\mat{D}_1$ and $\mat{D}_2$
            such that $\mat{B} \coloneqq \mat{D}_1 \mat{Y} \mat{D}_2$ is doubly stochastic,
            and they are unique up to a multiplicative constant.
        {\normalfont (2)} 
            By rescaling the columns and rows alternately so that the sums are 1, 
            $\mat{Y}$ converges to the aforementioned matrix $\mat{B}$.
    \end{theorem}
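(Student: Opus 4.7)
The plan is to derive both parts from a single convex optimization. Parameterize the unknown diagonals as $\mat{D}_1 = \mathrm{diag}(e^{a_i})$ and $\mat{D}_2 = \mathrm{diag}(e^{b_j})$, and consider the potential
\begin{equation*}
\varphi(a, b) \;=\; \sum_{i,j} Y_{ij}\, e^{a_i + b_j} \;-\; \sum_i a_i \;-\; \sum_j b_j.
\end{equation*}
Since every $Y_{ij}>0$, $\varphi$ is smooth and convex, strictly so apart from the one-parameter shift $a_i \mapsto a_i + c$, $b_j \mapsto b_j - c$ under which $\varphi$ is manifestly invariant (a short Hessian computation shows the kernel is exactly this line). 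A direct gradient computation gives $\partial\varphi/\partial a_i = \sum_j Y_{ij} e^{a_i+b_j} - 1$, so that a critical point of $\varphi$ corresponds precisely to a scaling in which every row of $\mat{B} \coloneqq \mat{D}_1 \mat{Y} \mat{D}_2$ sums to $1$; the condition on $b_j$ is symmetric for columns. Hence the critical points of $\varphi$ are in bijection with the pairs $(\mat{D}_1, \mat{D}_2)$ sought in part~(1).

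For existence in (1), I would break the shift invariance (for instance by imposing $\sum_i a_i = 0$) and establish that $\varphi$ is coercive on the resulting affine slice. This is the main obstacle: one must verify that whenever $\|(a,b)\| \to \infty$ subject to the normalization, some cross term $Y_{ij} e^{a_i + b_j}$ grows faster than the linear part $-\sum_i a_i - \sum_j b_j$ can pull the value down. The hypothesis that \emph{every} entry $Y_{ij}$ is strictly positive is decisive here---a single vanishing entry could open an escape direction along which $\varphi$ stays bounded. Granted coercivity and continuity, a minimizer exists, and strict convexity modulo the shift then yields uniqueness up to that one-parameter freedom, which translates precisely to the ``unique up to a multiplicative constant'' assertion for $(\mat{D}_1, \mat{D}_2)$.

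For part~(2), I would observe that the alternating row-and-column normalization is exactly block coordinate descent on $\varphi$. Indeed, with $b$ fixed, the closed-form minimizer over $a$ is $a_i = -\log \sum_j Y_{ij} e^{b_j}$, and exponentiating shows that $u_i = e^{a_i}$ is the unique rescaling making row $i$ of the current $\mat{B}$ sum to $1$; the column step is symmetric. Standard convergence results for alternating minimization applied to smooth, strictly convex, coercive objectives then imply the iterates converge, and uniqueness from part~(1) forces the limit to be the $\mat{B}$ exhibited there. A complementary route, which additionally delivers a geometric convergence rate, is to equip the positive orthant with the Hilbert projective metric, show that a full row-then-column sweep is a strict contraction in that metric, and invoke Banach's fixed-point theorem.
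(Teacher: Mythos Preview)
The paper does not prove this theorem at all: Theorem~1 is quoted as Sinkhorn's classical result with a citation and is then used as a black box (the only proof in the paper is that of Theorem~2, which \emph{invokes} Theorem~1). There is therefore no ``paper's own proof'' to compare against.

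Your outline is nonetheless one of the standard modern proofs and is sound. The potential $\varphi(a,b)=\sum_{ij}Y_{ij}e^{a_i+b_j}-\sum_i a_i-\sum_j b_j$ is the Lagrangian dual of the entropy-regularized assignment problem; its stationarity conditions are exactly the row and column constraints, the one-dimensional shift $(a,b)\mapsto(a+c\mathbf{1},b-c\mathbf{1})$ is the full kernel of the Hessian, and alternating minimization over the $a$- and $b$-blocks reproduces the row/column normalization verbatim. The coercivity step you flag as ``the main obstacle'' is genuinely the only place requiring care, but it goes through cleanly: on the slice $\sum_i a_i=0$, along any nonzero direction $(da,db)$ either some $da_i+db_j>0$ (and then, since $Y_{ij}>0$, that exponential dominates the linear part), or every $da_i+db_j\le 0$, which forces $\max_j db_j\le -\max_i da_i\le 0$ and hence $-t\sum_j db_j\to+\infty$ while the exponential sum stays nonnegative. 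Either way $\varphi\to+\infty$, and strict positivity of \emph{all} entries is used precisely in the first branch. With coercivity in hand, existence, uniqueness modulo shift, and convergence of two-block coordinate descent all follow from textbook convex analysis, as you indicate. The Hilbert-metric contraction alternative you mention is closer to Sinkhorn's original 1964 argument and also works.
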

    The second part of the theorem is sometimes referred to as the Sinkhorn iteration.
    Numerically, it is more convenient computing the iteration on a log scale.
    That is, the initial value is $\mat{Z}^{(0)} = \log \mat{Y}$ (element-wise log),
    and the Sinkhorn iteration translates to the alternating subtraction of the log-sum-exp as follows,
    \begin{alignat}{2}
    \textstyle    
        Z_{ij}^{(2l+1)} & 
            \gets Z_{ij}^{(2l)}   &&- \textstyle \logsumexp{i'} Z_{i'j}^{(2l)} \label{sinkhorniter1} \\
    \textstyle
        Z_{ij}^{(2l+2)} & 
            \gets Z_{ij}^{(2l+1)} &&- \textstyle \logsumexp{j'} Z_{ij'}^{(2l+1)}\label{sinkhorniter2}
    \end{alignat}
    then $\exp \mat{Z}^{(k)}$ (element-wise exp) converges to a doubly stochastic matrix 
    $\mat{B} = \mat{D}_1 \exp (\mat{Z}^{(0)}) \mat{D}_2$ as $k \to \infty$.

\subsection{Where the Sinkhorn Iteration Converges} 
    Now let us introduce a theorem that connects the objective Eq.~\eqref{eq:frob} and the limit of the Sinkhorn iteration.
    The following theorem states that the doubly stochastic matrix obtained by the Sinkhorn iteration 
    minimizes the quantity Eq.~\eqref{eqsoftperm}, which is an entropy-regularized version of Eq.~\eqref{eq:frob}.
    Hereafter, we add subscripts to $\mat{Z}^{(k)}$ to explicitly note that it is dependent on $\beta$ and $\mat{C}$.
    \begin{theorem}\label{thm2}
        Let $\mat{C} \in \mathbb{R}^{\numsrc\times \numsrc}$ be an arbitrary square matrix,
        $\beta > 0$ be an inverse temperature parameter,
        and let $\entropy{\mat{B}}= -\sum_{ij} B_{ij} \log B_{ij}$ be the entropy of $\mat{B}$.
        Then, following doubly stochastic matrices are equal.
        \begin{enumerate}[\normalfont (I)]
        \item 
            The limit of the Sinkhorn iteration $($Eqs.~\eqref{sinkhorniter1}, \eqref{sinkhorniter2}$)$
            $\exp \mat{Z}^{(\infty)}_{\beta, \mat{C}}$,
            where the initial value is $\mat{Z}^{(0)}_{\beta, \mat{C}} = -\beta \mat{C}$. 
        \item 
            $\hat{\mat{B}}_{\beta, \mat{C}} \in \birkhoff$ 
            which minimizes an entropy-regularized Frobenius inner product
            $\mathcal{S}_{\beta, \mat{C}}(\mat{B})$,
        defined by
        \end{enumerate}
        \begin{equation}
                \mathcal{S}_{\beta, \mat{C}}(\mat{B}) 
                    \coloneqq \frob{\mat{C}}{\mat{B}} - \frac{1}{\beta}\entropy{\mat{B}}
                    = \frob{   \mat{C} + \frac{1}{\beta} \log \mat{B}}{~\mat{B}}. \label{eqsoftperm} 
        \end{equation}
    \end{theorem}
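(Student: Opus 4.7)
The plan is to characterize the minimizer in (II) via Lagrange multipliers, show that it necessarily has the ``diagonal-scaling'' form $\mat{D}_1 \exp(-\beta \mat{C}) \mat{D}_2$, and then invoke Theorem~\ref{thm1} to identify it with the limit of the Sinkhorn iteration described in (I).

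First I would establish existence and interiority of the minimizer. The objective $\mathcal{S}_{\beta, \mat{C}}$ is strictly convex on the compact convex set $\birkhoff$, because the Frobenius term is affine in $\mat{B}$ while $-\entropy{\mat{B}}/\beta$ is strictly convex; hence a unique minimizer $\hat{\mat{B}}_{\beta, \mat{C}}$ exists. Moreover, since $-B\log B$ has derivative $-\log B - 1$, which diverges to $+\infty$ as $B \downarrow 0$, every entry of the minimizer is strictly positive. Therefore the only active constraints at the optimum are the $2\numsrc$ linear marginal conditions, and the non-negativity multipliers may be discarded.

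Next I would introduce Lagrange multipliers $\alpha_i$ and $\gamma_j$ for $\sum_j B_{ij} = 1$ and $\sum_i B_{ij} = 1$. The stationarity condition
\begin{equation*}
    C_{ij} + \tfrac{1}{\beta}(\log B_{ij} + 1) + \alpha_i + \gamma_j = 0
\end{equation*}
rearranges to
\begin{equation*}
    \hat{B}_{ij} = e^{-1/2 - \beta \alpha_i}\; e^{-\beta C_{ij}}\; e^{-1/2 - \beta \gamma_j},
\end{equation*}
which is precisely a scaling $\mat{D}_1 \exp(-\beta \mat{C}) \mat{D}_2$ by positive diagonal matrices. Since $\exp(-\beta\mat{C})$ has strictly positive entries, Theorem~\ref{thm1} applies: there is a unique (up to reciprocal rescaling) pair of positive diagonal matrices for which $\mat{D}_1 \exp(-\beta\mat{C}) \mat{D}_2 \in \birkhoff$, and this matrix is the limit of the multiplicative alternating row/column normalization iteration applied to $\exp(-\beta \mat{C})$. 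The log-domain updates \eqref{sinkhorniter1}--\eqref{sinkhorniter2} are exactly those alternate normalizations rewritten on the log scale, and $\exp \mat{Z}^{(0)}_{\beta, \mat{C}} = \exp(-\beta\mat{C})$ by definition, so $\exp \mat{Z}^{(\infty)}_{\beta, \mat{C}} = \hat{\mat{B}}_{\beta, \mat{C}}$.

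The main obstacle I anticipate is the bookkeeping around the KKT conditions: one must justify dropping the multipliers for the non-negativity constraints, and this hinges on the interior-minimum observation produced by the entropy term. Once interiority is in hand, the rest is a standard Lagrangian computation followed by a direct appeal to Sinkhorn's theorem, which supplies both the functional form of the minimizer and its identification with the iterative limit.
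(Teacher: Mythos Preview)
Your proposal is correct and follows essentially the same route as the paper: show that the Sinkhorn limit has the diagonal-scaling form $\mat{D}_1 \exp(-\beta\mat{C})\mat{D}_2$, use Lagrange multipliers to show the entropy-regularized minimizer has the same form, and conclude equality from the uniqueness clause of Theorem~\ref{thm1}. Your write-up is more detailed than the paper's sketch (in particular the interiority argument justifying the drop of the non-negativity multipliers), but the underlying strategy is identical.
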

    \begin{proof}
        \normalfont
            The matrix (I) has a form $\mat{D}_1 \exp(-\beta \mat{C}) \mat{D}_2$ ($\because$ Theorem 1(2)).
            A simple calculation based on the Lagrange multipliers method
            reveals that the matrix (II) can be written as $\hat{\mat{B}} = \mat{D}_3 \exp(-\beta \mat{C}) \mat{D}_4$.
            Because of the uniqueness (Theorem 1(1)), they are equal.
            See~\cite[Lem.2]{Cuturi_2013},~\cite[Lem.1]{Mena_Belanger_Linderman_Snoek_2018}~\cite[Prop.4.3]{peyre2019computational}.
    \end{proof}
    In addition, under some technical assumptions,
    it can be also shown that the cold limit $(\beta \to \infty)$ almost surely converges to a permutation matrix
    that minimizes our original objective Eq.~\eqref{eq:frob}~\cite{Mena_Belanger_Linderman_Snoek_2018}.

\begin{figure}[!t]
    \centering
    \includegraphics[width=\linewidth]{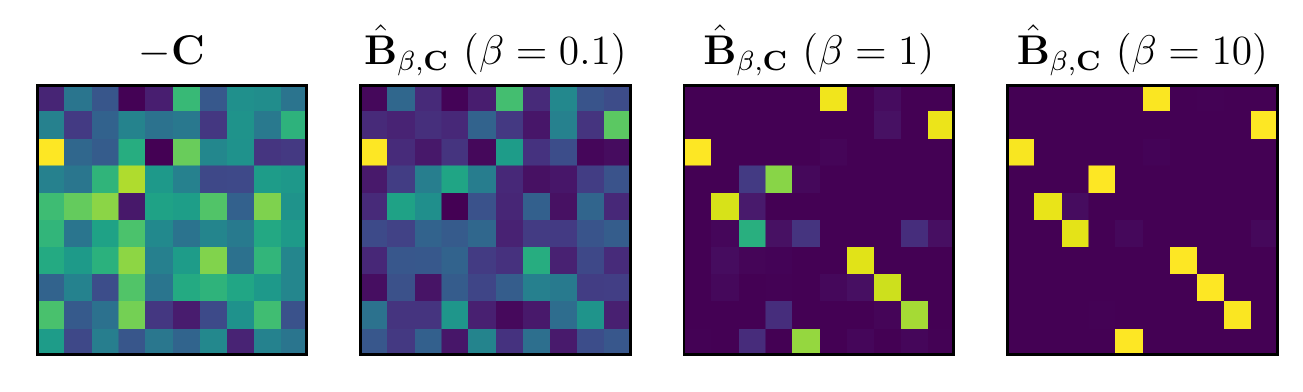}
    \caption{
        An example of a matrix $C_{ij} \sim \mathcal{N}(0, 10^2)$ 
        and the doubly stochastic matrices 
        $\hat{\mat{B}}_{\beta, \mat{C}}=\exp \mat{Z}_{\beta,\mat{C}}^{(k)}$,
        where $k=200$.
    }\label{fig:mat}
\end{figure}
\begin{figure}[!t]
    \centering
    \includegraphics[width=0.8\linewidth,clip]{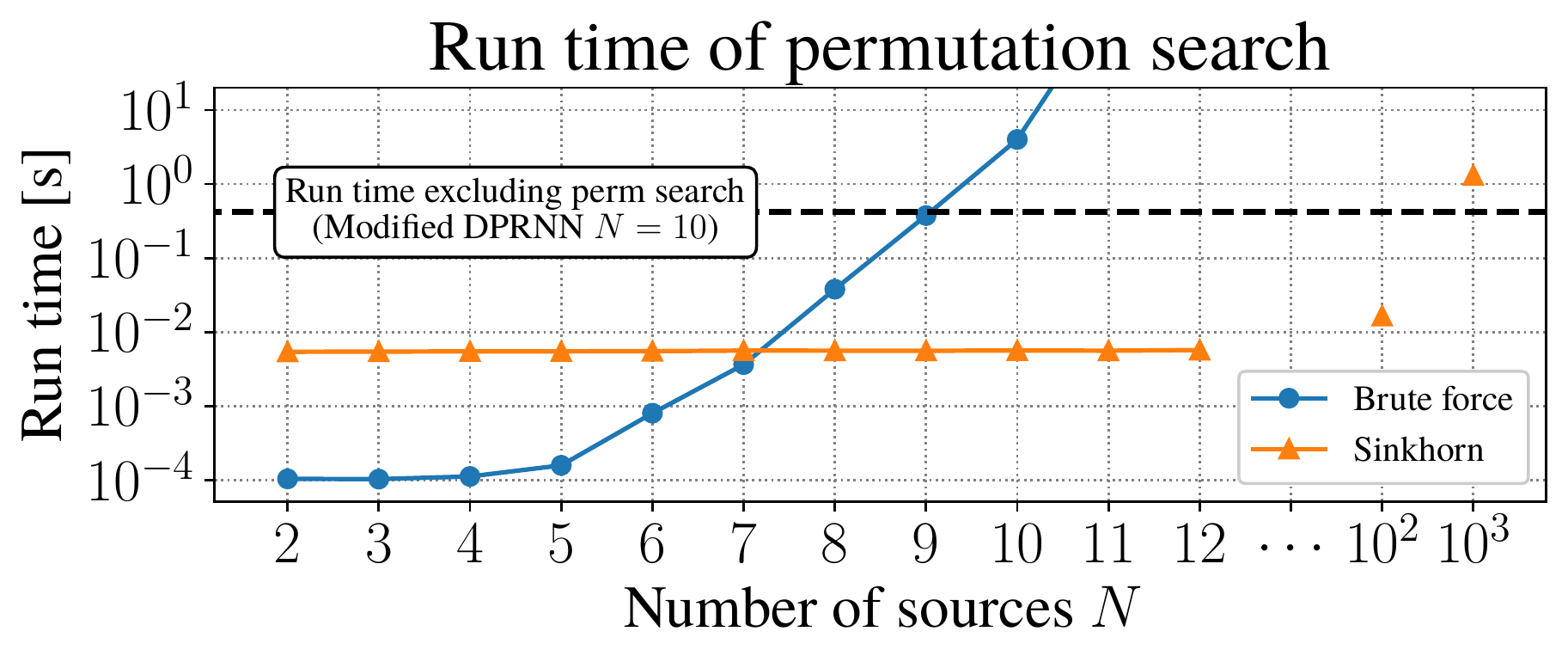}
    \caption{
        The computation time to solve a permutation problem once.
        The ``brute force'' is based on the Asteroid~\cite{Pariente2020Asteroid} implementation.
    }\label{fig:runtime}
\end{figure}

\subsection{Our New Objective: The SinkPIT Loss}
    Putting all things together, we may rewrite our objective Eq.~\eqref{eq:frob} as
    \begin{equation}
        \mathcal{L}_\text{\normalfont PIT} =
            \frac{1}{\numsrc} \mathbb{E}
                \big[
                    \lim_{\beta \to \infty} \lim_{k \to \infty} 
                    \mathcal{S}_{\beta, \mat{C}}(\exp \mat{Z}^{(k)}_{\beta,\mat{C}} )
                \big].
    \end{equation}
    Thus we have successfully removed the evaluation of ``$\min_{\mat{P} \in \perms}$''
    from our objective function.
    The ``$\to \infty$'' is still computationally intractable,
    but we can use some finite $k$ and $\beta$ instead.
    Fig.~\ref{fig:mat} shows 
    that small $k$ and $\beta$ work in practice.
    Thus we may define our objective function as follows, and
    let us call it a \textit{SinkPIT} loss.
    \begin{equation}
        \boxed{
            \mathcal{L}_\text{SinkPIT} \coloneqq 
                \frac{1}{\numsrc} \mathbb{E}\left[
                    \frob{\mat{C} + \frac{1}{\beta}\mat{Z}^{(k)}_{\beta, \mat{C}}~}{~  \exp \mat{Z}^{(k)}_{\beta, \mat{C}}  }
                \right].
        }
        \label{eq:sinkhornobjective}
    \end{equation}
    The computation complexity of Eq.~\eqref{eq:sinkhornobjective} is 
    $\bigo(k \numsrc^2) \ll \bigo(\numsrc !)$. 
    Fig.~\ref{fig:runtime} compares the actual run time.
    When $\numsrc \ge 8$, the SinkPIT is significantly faster than the ordinary PIT.
    Fig.~\ref{diagram:sinkpit} shows the computation graph of the SinkPIT training.
    The cost of the Sinkhorn iteration is typically much smaller than the other parts of the training,
    such as the backpropagation inside of the target model $f_\theta$.

\subsection{Relation to Other PIT Siblings}
    In this section, let us briefly consider the relationship between the SinkPIT loss and other PIT variants.
    To date, some PIT variants have been proposed, including 
    the OR-PIT~\cite{takahashi2019recursive} and the ProbPIT~\cite{yousefi2019probabilistic},
    the former being a greedy approximation,
    and the latter being a probabilistic relaxation of the ordinary PIT.
    The SinkPIT is particularly close to the ProbPIT,
    which is defined as follows.
    \begin{align}
    \textstyle
        \mathcal{L}_\text{ProbPIT} 
        \coloneqq& \textstyle 
        \mathbb{E} [ - \log \sum_{\sigma \in \mathfrak{S}_\numsrc} \prob(\sigma)^{1/\gamma} \prod_{i=1}^{\numsrc} \prob(y_i, s_{\sigma(i)})^{1/\gamma} ] \\
        =& \textstyle 
        \mathbb{E} [ - \log \sum_{\mat{P} \in \perms} \prob(\mat{P})^{1/\gamma} \exp (-\frob{\mat{C}}{\mat{P}}/\gamma ) ]
    \end{align}
    where $p(y_i,s_j) = \exp(-C_{ij})$, and
    $p(\sigma)$ is a prior distribution over the permutations $\mathfrak{S}_\numsrc$. 
    Using the ``addition'' $\oplus_\gamma$ and ``multiplication'' $\otimes_\gamma$ operators of the log semiring,
    defined by the relation
    $x \circledast_\gamma y \coloneqq - \gamma \log (e^{-x/\gamma} \ast e^{-y/\gamma})$
    where $\ast \in \{ +, \times\}$
    (see e.g.~\cite{goodman1999semiring,mohri2002weighted}),
    it is written in a ``sum-of-product'' form as follows,
    \begin{equation}
            \mathcal{L}_\text{ProbPIT} 
                \propto 
            \mathbb{E}
                \bigg[
                    \mybigoplus_{\mat{P} \in \perms} \lambda^{}_{\scriptscriptstyle \mat{P}} \otimes_\gamma \frob{\mat{C}}{\mat{P}}
                \bigg],
                \label{eq:probpit}
    \end{equation}
    where $\lambda^{}_{\scriptscriptstyle \mat{P}} \coloneqq - \log \prob(\mat{P})$.
    Noting that $x \oplus_\gamma y \to \min(x, y)$ at the tropical limit $\gamma \to 0$,
    and that $\otimes_\gamma$ is always the same as $+$,
    the ProbPIT goes to $\mathbb{E}[\min_{\mat{P} \in \perms} (\lambda^{}_{\scriptscriptstyle \mat{P}} + \frob{\mat{C}}{\mat{P}})]$
    which is essentially the same as the ordinary PIT if the prior is sufficiently flat.

    On the other hand, the predominant term of our SinkPIT is also written in 
    a sum-of-product form as follows,
    \begin{equation}
        \mathcal{L}_\text{SinkPIT} \appropto
        \mathbb{E} [ \langle {\mat{C}}, {\hat{\mat{B}}_{\beta, \mat{C}}} \rangle_\mathsf{F} ]
         = \mathbb{E}
            \bigg[ 
                \sum_{\mat{P} \in \perms} \mu^{}_{\scriptscriptstyle \mat{P}} \times \frob{\mat{C}}{\mat{P}} 
            \bigg],
        \label{eq:expansion}
    \end{equation}
    because of Birkhoff-von Neumann's theorem 
    which states that $\mat{B} \in \birkhoff$ can always be written as 
    $\mat{B} = \sum_{\mat{P}\in \perms} \mu^{}_{\scriptscriptstyle \mat{P}} \mat{P}$
    (where $\mu^{}_{\scriptscriptstyle \mat{P}} \ge 0, \sum \mu^{}_{\scriptscriptstyle \mat{P}} = 1$),
    as well as the linearity of $\frob{\cdot}{\cdot}$.

    Although the algebraic operators are different,
    both PIT variants are 
    understood as the ``weighted sums'' of the losses over the hypothetical permutations.
    In addition, both losses are parameterized relaxations of the original PIT,
    and they go to the original objective at the limits. 
    However, their computation complexities are different,
    as the ProbPIT requires the summation over $\numsrc!$ terms unless the prior is sparse,
    whereas the SinkPIT skips such a costly computation
    by the aforementioned efficient algorithm.

\section{Neural Network for 10 Speaker Separation}
        This section describes the details of a neural network model to separate a signal into many components.
        (We consider the cases $\numsrc=5$ and $\numsrc=10$.)
        Firstly, let us briefly introduce the TasNet framework~\cite{luo2018tasnet,Luo_2019,luo2020dual,chen2020dual};
        the TasNet architecture (Fig.~\ref{fig:tasnet}) 
        is simply a time-frequency (T-F) masking on a learned T-F representation,
        instead of the conventional STFT domain.
        The main difference of the TasNet models is what neural network models to use as a MaskNet.
        For example, the ConvTasNet~\cite{Luo_2019} uses the convolutional networks,
        and the DPRNN~\cite{luo2020dual} uses the recurrent networks.

        In our experiment, we used the DPRNN~\cite{luo2020dual} as the base model,
        and we modified it a little.
        Although we have not exhaustively explored the hyperparameters, 
        the following modifications seemed to be suitable for our problem.
        \begin{enumerate}[(i)]
            \item 
                We used a longer Encoder/Decoder window (original: $l_\text{win} =2$,  modified: $l_\text{win} =16$),
                and more filters (orig: $d_\text{filter} = 64$, mod: $d_\text{filter} = 256$).
            \item 
                When $\numsrc=10$, we increased the dimension of the RNN hidden units 
                (orig: $d_\text{RNN}=128$, mod: $d_\text{RNN}=384$),
                and the number of DPRNN blocks (orig: 6 blocks, mod: 8 blocks) 
                (We did not modify these parameters when $\numsrc=5$.)
            \item 
                We replaced the gated activation after the 2D convolution,
                $\text{tanh}(\cdot)\cdot \text{sigmoid}(\cdot)$,
                with $\text{PReLU} (\cdot) \cdot \text{sigmoid}(\cdot)$.
            \item 
                Instead of directly expanding the dimension from $d_\text{RNN}$ to $\numsrc d_\text{filter}$
                by a single affine layer at the last of the MaskNet,
                we used a small point-wise network: 
                $
                \text{Affine} ({d_\text{RNN}} \to {4 \numsrc d_\text{filter}})
                \to 
                \text{PReLU} 
                \to
                \text{Affine}({4 \numsrc d_\text{filter}} \to {\numsrc d_\text{filter}})
                \to 
                \text{softmax}$.
        \end{enumerate}

    Let us specify the parameters of the SinkPIT, $\mat{C}, \beta$ and $k$.
    The base pairwise loss matrix $\mat{C}$ was the negative value of SI-SDR~\cite{le2019sdr}\footnote{
        Some readers may be tempted to 
        use the SI-SDRi, i.e.\ $C_{ij} = -(\text{SI-SDR}(y_j,s_i) - \text{SI-SDR}(x,s_i))$.
        However, it will not improve the performance 
        because the result of the Sinkhorn iteration is invariant under the transform
        $e^{-\beta C_{ij}}
        \mapsto 
        e^{-\beta C_{ij} + \xi_i + \eta_j}$
        for any vectors $\bm{\xi}, \bm{\eta} \in \mathbb{R}^N$.
    },
    i.e., $C_{ij} = - \text{SI-SDR}(y_j(t) ,s_i(t) )$, 
    where SI-SDR is defined as
    \begin{equation}
        \sisdr(u(t), v(t)) = 10 \log_{10} \frac{\langle u, v\rangle^2 }{\|u\|^2\|v\|^2 - \langle u, v\rangle^2 }.
    \end{equation}
    During the training, we annealed the coldness parameter $\beta$.
    The cooling schedule was
    $
        \beta = \min(1.02^{\#\text{epoch}}, 10).
    $
    The number of iteration was always $k=200$.

\begin{figure}[!t]
        \centering
        \fbox{
        \includegraphics[width=0.9\linewidth,clip]{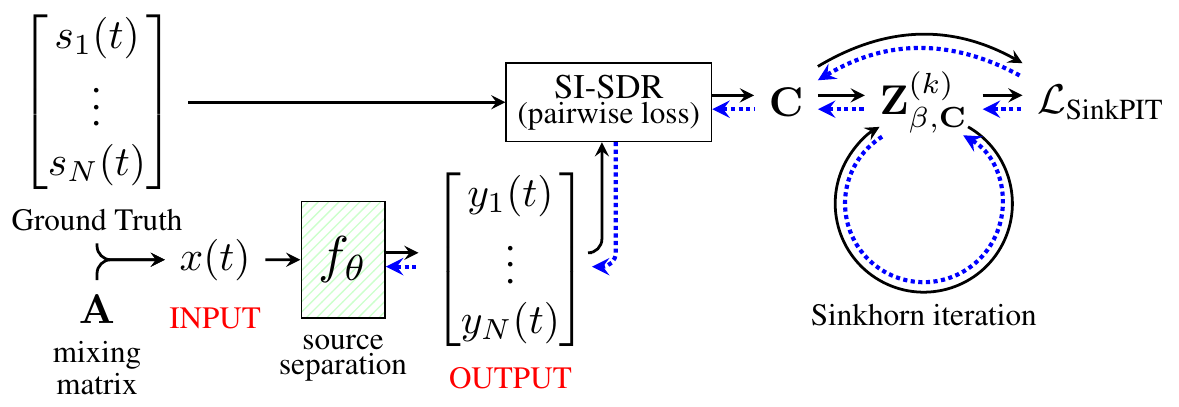}%
        }
        \caption{
            The computation graph of the SinkPIT-based training. 
            Blue dotted arrows indicate the backpropagation paths.}
        \label{diagram:sinkpit}
\end{figure}
\begin{figure}[!t]
        \centering
        \fbox{
        \includegraphics[width=0.9\linewidth,clip]{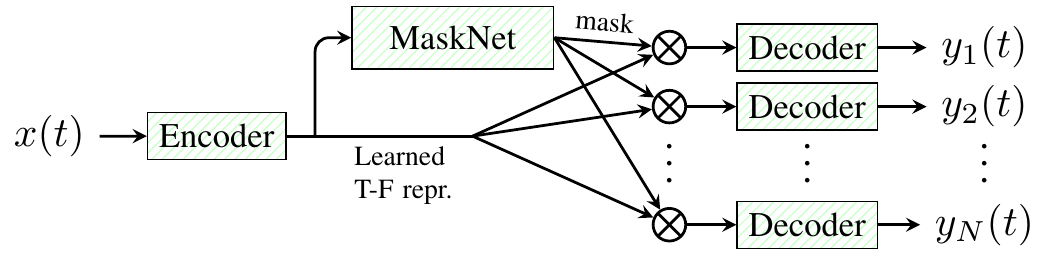}%
        }
        \caption{The computation graph of the TasNet framework.}
        \label{fig:tasnet}
\end{figure}

\begin{figure*}[!t]
    \centering
    \begin{subfigure}[t]{1\textwidth}
        \centering
            \includegraphics[width=\linewidth]{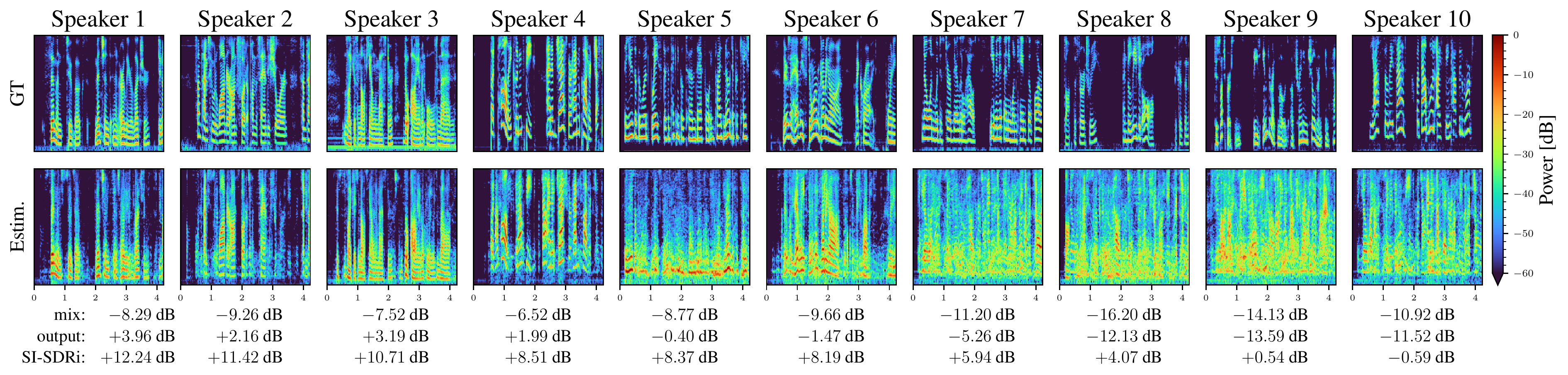}
            \caption{
                A typical result of 10-mix separation (mean SI-SDRi = 6.94 dB).
                1st row: mel spectrograms of ground truth signals (excerpt from LibriSpeech~\cite{panayotov2015librispeech}). 
                2nd row: mel spectrograms of estimated signals.
                For visility, the ground truths and the estimations are aligned, and they are sorted by the SI-SDRi.
            }\label{fig:specs}
    \end{subfigure}
    \\
    \begin{subfigure}[t]{0.26\textwidth}
        \includegraphics[width=1.0\linewidth]{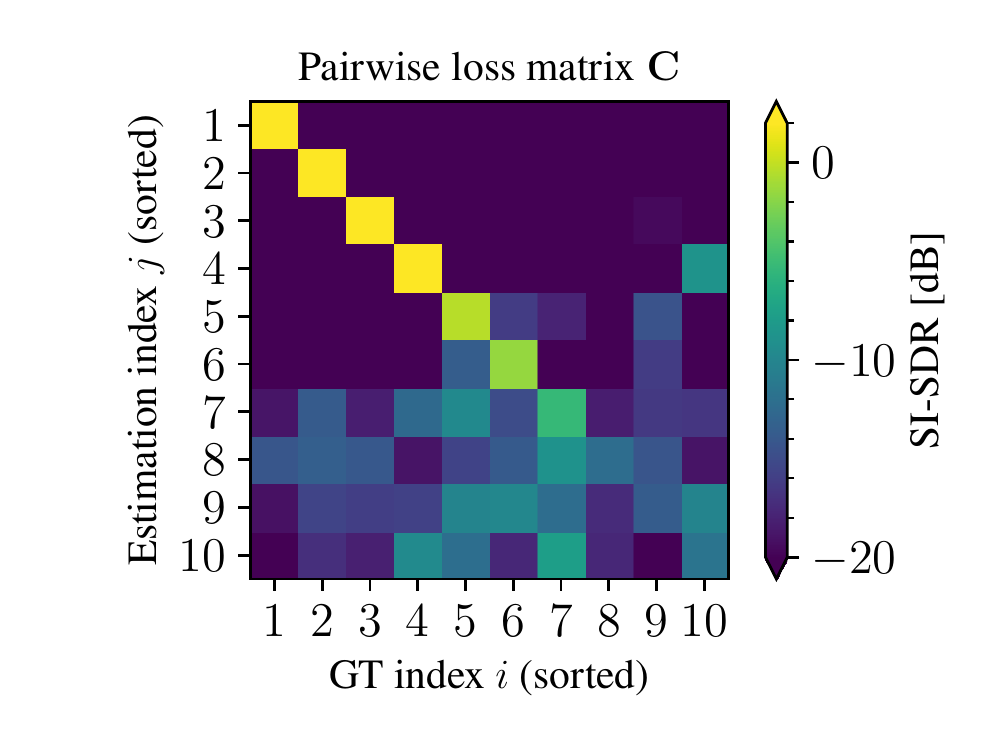}
        \caption{
            The pairwise loss matrix $\mat{C}$.\\The same test data as Fig.~\ref{fig:specs}.
        }\label{fig:pwloss}
    \end{subfigure}
    \begin{subfigure}[t]{0.39\textwidth}
        \includegraphics[width=1.0\linewidth]{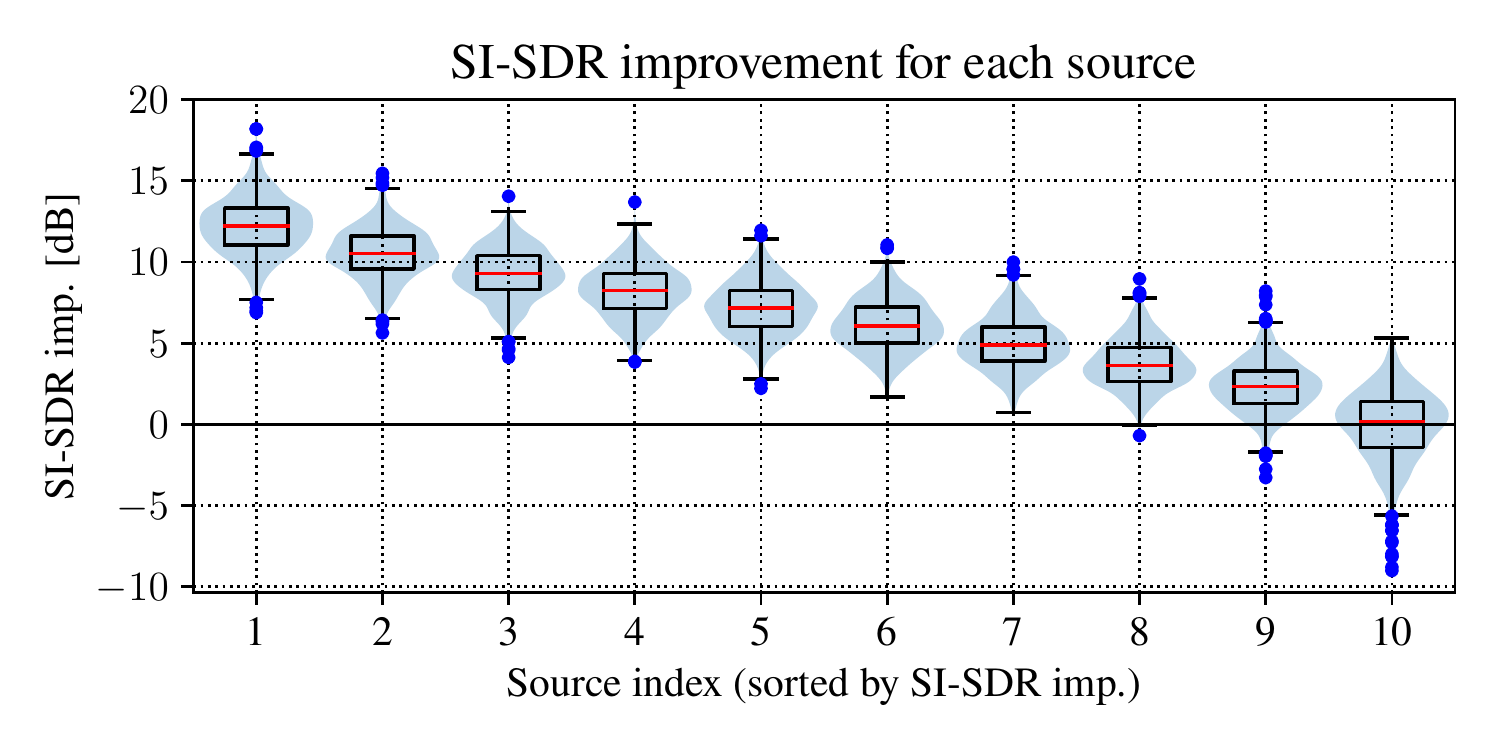}
        \caption{
            The distribution of SI-SDRi of the separated signals for each mixture. Sorted by SI-SDRi.
        }\label{fig:rank}
    \end{subfigure}
    \begin{subfigure}[t]{0.33\textwidth}
        \includegraphics[width=1.0\linewidth]{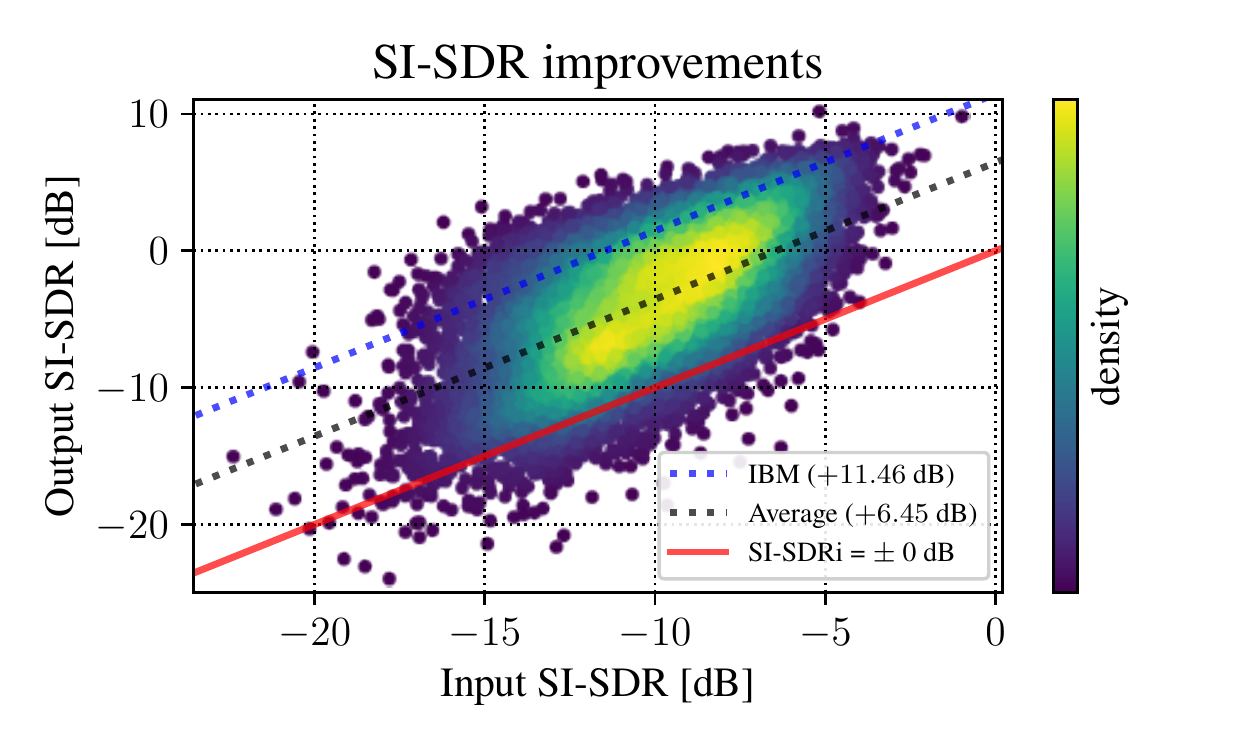}
        \caption{
            The distribution of the input/output SI-SDRs.
        }\label{fig:scatter}
    \end{subfigure}
    \caption{Experimental results where the number of source signals was $\numsrc=10$.}\label{fig:expresults}
\end{figure*}

\section{Experiment}

\subsection{Experimental Setup}
    We experimentally evaluated the effectiveness of the SinkPIT
    based on a standard evaluation protocol provided in the Asteroid framework~\cite{Pariente2020Asteroid}.
    We basically used the default hyperparameters of the Asteroid, 
    unless otherwise noted (see also the previous section).
    The evaluation metric was the SI-SDRi (SI-SDR improvement).

    The dataset we used was the LibriMix~\cite{panayotov2015librispeech,cosentino2020librimix}.
    The configuration was ``$N$mix-8kHz-min-clean'':
    the number of source signals was $N (=5, 10)$,
    the sampling rate was 8~kHz, 
    each source in a mixture was adjusted to the shortest one (``min''),
    and there was no noise in the mixture (``clean'').
    The original LibriMix provides only 2-mix and 3-mix data, but 
    we generated 5-mix and 10-mix data using a LibriMix script.
    The numbers of the train/valid/test data were 
    $\textit{train}=20300,\textit{valid}=3000, \textit{test}=3000$ (5-mix) and 
    $\textit{train}=10100,\textit{valid}=1000, \textit{test}=1000$ (10-mix).

    To train the model, we used the Adam optimizer ($\beta_1=0.5,\beta_2=0.9,\varepsilon=10^{-3}$).
    The learning rate started at $\mathit{lr}=10^{-3}$, 
    and was halved if the learning stagnated in the last 5 epochs,
    with the minimum value of $\mathit{lr} = 5\times 10^{-5}$.
    The batch size was 8,
    4 GPUs were used,
    and the number of training epochs was 300.
    During the training, 
    unlike the LibriMix practice of using a fixed mixing matrix 
    $\mat{A} \in \mathbb{R}_{>0}^{1 \times \numsrc}$ for each mixture,
    we randomly drew a different $\mat{A}$ from a log-uniform distribution (10 dB of dynamic range) 
    at each step for data augmentation.

\begin{table}[!t]
    \centering
    \caption{SI-SDR improvements}\label{tab:onlytable}
    {\small
    \begin{tabular}{llrr}
        \hline
        Method & PIT &  5-mix & 10-mix  \\ 
        \hline\hline
        Ideal Binary Mask & -- & 11.35 dB & 11.46 dB   \\ 
        Ideal Ratio Mask & -- & 10.63 dB & 10.79 dB  \\ 
        \hline
        Modified DPRNN & PIT      &  9.26 dB & N/A  \\
        Modified DPRNN & ProbPIT{\scriptsize ${(\gamma=0.1)}$}   & 9.60 dB & N/A  \\
        Modified DPRNN & SinkPIT   &  9.39 dB & 6.45 dB \\
        \hline
    \end{tabular}
    }
\end{table}

\subsection{Results}
    Table~\ref{tab:onlytable} shows the SI-SDR improvement for each condition.
    Some audio samples (including the ones based on~\cite{takamichi2019jvs}) are also available at the author's website\footnote{
        \texttt{https://tachi-hi.github.io/research/}
    }.
    Regarding the run time of training, the following results were obtained.
    For $\numsrc=5$, there was no significant difference,
    with each epoch requiring approximately 13~min (PIT and ProbPIT), and 15~min (SinkPIT), respectively.
    For $\numsrc=10$, on the other hand, 
    a single epoch required about 3~hr~30~min (PIT and ProbPIT) and 17~min (SinkPIT), respectively.

    Fig.~\ref{fig:expresults} shows more detailed results for the case of $\numsrc=10$.
    Although the source separation from a single channel to 10 sources is a very difficult problem, 
    the strong correlation between the estimated signals and the ground truth sources 
    can be observed visually, and it can also be confirmed quantitatively through SI-SDRi values (Fig.~\ref{fig:specs}).
    Fig.~\ref{fig:pwloss} shows a pairwise loss $\mat{C}$,
    in which we may observe that some sources are clearly separated from others,
    while some are still mixed.
    Fig.~\ref{fig:rank} shows 
    the SI-SDRi of each source in a mixture.
    The SI-SDRi of the best source in the mixture often achieves $> +10$~dB,
    and the top 5 sources almost always achieve $> +5$~dB.
    All but the worst source almost always achieve $> +0$~dB.
    Fig.~\ref{fig:scatter} shows the relation between the input and the output SI-SDR.

\section{Conclusion and Future Work}
    This paper proposed the SinkPIT, a variant of the PIT loss,
    which is incomparably faster than the ordinary PIT if the number of sources is large.
    It enabled us to train a many-source separation model in a reasonable time.
    Although the separation quality was not yet perfect,
    we would say the experimental results were still promising enough.
    It is not at a practical level at present
    if the number of source signals is $\numsrc=10$, 
    but some improvement can be expected with a simple combination of the basic ideas.
    For instance, a multi-channel setting would be interesting.
    By using several microphones ($\nummix \ge 2$), 
    it may be enabled to extract 10 or more speakers' voices more clearly.

    These positive results encourage us to tackle the very large scale source separation problems in earnest.
    It will make a starting point of a new direction of audio source separation studies,
    as the idea of SinkPIT is almost ``perpendicular'' to other concepts of source separation techniques.
    There seem to be a huge amount of research topics worthy of consideration, 
    even within straightforward ideas.
    These are subjects for future study.

\bibliographystyle{ieeetr}
{
    \bibliography{myref}

\begin{thebibliography}{10}

\bibitem{lee1999blind}
T.-W. Lee, M.~S. Lewicki, M.~Girolami, and T.~J. Sejnowski, ``Blind source
  separation of more sources than mixtures using overcomplete
  representations,'' {\em IEEE signal processing letters}, vol.~6, no.~4,
  pp.~87--90, 1999.

\bibitem{amari1999natural}
S.-I. Amari, ``Natural gradient learning for over-and under-complete bases in
  {ICA},'' {\em Neural Computation}, vol.~11, no.~8, pp.~1875--1883, 1999.

\bibitem{bofill2001underdetermined}
P.~Bofill and M.~Zibulevsky, ``Underdetermined blind source separation using
  sparse representations,'' {\em Signal processing}, vol.~81, no.~11,
  pp.~2353--2362, 2001.

\bibitem{Isik2016sep}
Y.~Isik, J.~Le~Roux, Z.~Chen, S.~Watanabe, and J.~R. Hershey, ``Single-channel
  multi-speaker separation using deep clustering,'' in {\em Proc. Interspeech},
  pp.~545--549, ISCA, 2016.

\bibitem{yu2017permutation}
D.~Yu, M.~Kolb{\ae}k, Z.-H. Tan, and J.~Jensen, ``Permutation invariant
  training of deep models for speaker-independent multi-talker speech
  separation,'' in {\em Proc. ICASSP}, pp.~241--245, IEEE, 2017.

\bibitem{kolbaek2017multitalker}
M.~Kolb{\ae}k, D.~Yu, Z.-H. Tan, and J.~Jensen, ``Multitalker speech separation
  with utterance-level permutation invariant training of deep recurrent neural
  networks,'' {\em IEEE/ACM TASLP}, vol.~25, no.~10, pp.~1901--1913, 2017.

\bibitem{kinoshita2018listening}
K.~Kinoshita, L.~Drude, M.~Delcroix, and T.~Nakatani, ``Listening to each
  speaker one by one with recurrent selective hearing networks,'' in {\em Proc.
  ICASSP}, pp.~5064--5068, IEEE, 2018.

\bibitem{luo2018tasnet}
Y.~Luo and N.~Mesgarani, ``{TasNet}: time-domain audio separation network for
  real-time, single-channel speech separation,'' in {\em Proc. ICASSP},
  pp.~696--700, IEEE, 2018.

\bibitem{takahashi2019recursive}
N.~Takahashi, S.~Parthasaarathy, N.~Goswami, and Y.~Mitsufuji, ``Recursive
  speech separation for unknown number of speakers,'' in {\em Proc.
  Interspeech}, pp.~1348--1352, ISCA, 2019.
\newblock arXiv:1904.03065.

\bibitem{Luo_2019}
Y.~Luo and N.~Mesgarani, ``{Conv-TasNet}: Surpassing ideal time–frequency
  magnitude masking for speech separation,'' {\em IEEE/ACM TASLP}, vol.~27,
  p.~1256–1266, Aug 2019.

\bibitem{yousefi2019probabilistic}
M.~Yousefi, S.~Khorram, and J.~H. Hansen, ``Probabilistic permutation invariant
  training for speech separation,'' in {\em Proc. Interspeech}, pp.~4604--4608,
  ISCA, 2019.

\bibitem{nachmani2020voice}
E.~Nachmani, Y.~Adi, and L.~Wolf, ``Voice separation with an unknown number of
  multiple speakers,'' in {\em Proc. ICML}, 2020.
\newblock arXiv:2003.01531.

\bibitem{luo2020dual}
Y.~Luo, Z.~Chen, and T.~Yoshioka, ``Dual-path {RNN}: efficient long sequence
  modeling for time-domain single-channel speech separation,'' in {\em Proc.
  ICASSP}, pp.~46--50, IEEE, 2020.

\bibitem{chen2020dual}
J.~Chen, Q.~Mao, and D.~Liu, ``Dual-path transformer network: Direct
  context-aware modeling for end-to-end monaural speech separation,'' in {\em
  Proc. Interspeech}, ISCA, 2020.
\newblock arXiv:2007.13975.

\bibitem{wisdom2020unsupervised}
S.~Wisdom, E.~Tzinis, H.~Erdogan, R.~J. Weiss, K.~Wilson, and J.~R. Hershey,
  ``Unsupervised sound separation using mixtures of mixtures,'' {\em NeurIPS},
  2020.
\newblock arXiv:2006.12701.

\bibitem{wang1999separation}
D.~L. Wang and G.~J. Brown, ``Separation of speech from interfering sounds
  based on oscillatory correlation,'' {\em IEEE transactions on neural
  networks}, vol.~10, no.~3, pp.~684--697, 1999.

\bibitem{wang2014training}
Y.~Wang, A.~Narayanan, and D.~Wang, ``On training targets for supervised speech
  separation,'' {\em IEEE/ACM TASLP}, vol.~22, no.~12, pp.~1849--1858, 2014.

\bibitem{Pariente2020Asteroid}
M.~Pariente, S.~Cornell, J.~Cosentino, S.~Sivasankaran, E.~Tzinis,
  J.~Heitkaemper, M.~Olvera, F.-R. Stöter, M.~Hu, J.~M. Martín-Doñas,
  D.~Ditter, A.~Frank, A.~Deleforge, and E.~Vincent, ``Asteroid: the
  {PyTorch}-based audio source separation toolkit for researchers,'' in {\em
  Proc. Interspeech}, ISCA, 2020.
\newblock arXiv:2005.04132, \url{https://github.com/mpariente/asteroid}.

\bibitem{Moon_Gunther_Kupin_2009}
T.~K. Moon, J.~H. Gunther, and J.~J. Kupin, ``Sinkhorn solves sudoku,'' {\em
  IEEE Trans. Inform. Theory}, vol.~55, no.~4, pp.~1741--1746, 2009.

\bibitem{Cuturi_2013}
M.~Cuturi, ``Sinkhorn distances: Lightspeed computation of optimal transport,''
  {\em NIPS}, pp.~2292--2300, 2013.

\bibitem{altschuler2017near}
J.~Altschuler, J.~Niles-Weed, and P.~Rigollet, ``Near-linear time approximation
  algorithms for optimal transport via {Sinkhorn} iteration,'' in {\em NIPS},
  pp.~1964--1974, 2017.

\bibitem{santa2017deeppermnet}
R.~Santa~Cruz, B.~Fernando, A.~Cherian, and S.~Gould, ``{DeepPermNet}: Visual
  permutation learning,'' in {\em Proc. CVPR}, pp.~6044--6052, IEEE, 2017.

\bibitem{Mena_Belanger_Linderman_Snoek_2018}
G.~E. Mena, D.~Belanger, S.~Linderman, and J.~Snoek, ``Learning latent
  permutations with {Gumbel}-{Sinkhorn} networks,'' in {\em Proc. ICLR}, 2018.

\bibitem{santambrogio2015optimal}
F.~Santambrogio, {\em Optimal transport for applied mathematicians: Calculus of
  Variations, {PDEs}, and Modeling}.
\newblock Birkh{\"a}user, Springer, 2015.

\bibitem{galichon2018optimal}
A.~Galichon, {\em Optimal transport methods in economics}.
\newblock Princeton University Press, 2016.

\bibitem{peyre2019computational}
G.~Peyr{\'e} and M.~Cuturi, ``Computational optimal transport,'' {\em
  Foundations and Trends in Machine Learning}, vol.~11, no.~5-6, pp.~355--607,
  2019.

\bibitem{korte2012combinatorial}
B.~Korte and J.~Vygen, {\em Combinatorial Optimization: Theory and Algorithms}.
\newblock Springer, 6th~ed., 2018.

\bibitem{sinkhorn1964relationship}
R.~Sinkhorn, ``A relationship between arbitrary positive matrices and doubly
  stochastic matrices,'' {\em Ann. Math. Stat.}, vol.~35, no.~2, pp.~876--879,
  1964.

\bibitem{goodman1999semiring}
J.~Goodman, ``Semiring parsing,'' {\em Computational Linguistics}, vol.~25,
  no.~4, pp.~573--606, 1999.

\bibitem{mohri2002weighted}
M.~Mohri, F.~Pereira, and M.~Riley, ``Weighted finite-state transducers in
  speech recognition,'' {\em Computer Speech \& Language}, vol.~16, no.~1,
  pp.~69--88, 2002.

\bibitem{le2019sdr}
J.~Le~Roux, S.~Wisdom, H.~Erdogan, and J.~R. Hershey, ``{SDR}--half-baked or
  well done?,'' in {\em Proc. ICASSP}, pp.~626--630, IEEE, 2019.

\bibitem{panayotov2015librispeech}
V.~Panayotov, G.~Chen, D.~Povey, and S.~Khudanpur, ``{LibriSpeech}: an {ASR}
  corpus based on public domain audio books,'' in {\em Proc. ICASSP},
  pp.~5206--5210, IEEE, 2015.
\newblock Available at \url{https://www.openslr.org/12/} (CC BY 4.0).

\bibitem{cosentino2020librimix}
J.~Cosentino, M.~Pariente, S.~Cornell, A.~Deleforge, and E.~Vincent,
  ``{LibriMix}: An open-source dataset for generalizable speech separation,''
  2020.
\newblock arXiv:2005.11262, \url{https://github.com/JorisCos/LibriMix}.

\bibitem{takamichi2019jvs}
S.~Takamichi, K.~Mitsui, Y.~Saito, T.~Koriyama, N.~Tanji, and H.~Saruwatari,
  ``{JVS} corpus: free {Japanese} multi-speaker voice corpus,'' 2019.
\newblock arXiv:1908.06248,
  \url{https://sites.google.com/site/shinnosuketakamichi/research-topics/jvs_corpus}.

\end{thebibliography}
}

\end{document}